\def\minwrt[#1]{\underset{#1}{\text{minimize }}}
\def\argminwrt[#1]{\underset{#1}{\text{arg min }}}
\def\argmaxwrt[#1]{\underset{#1}{\text{arg max }}}
\def\maxwrt[#1]{\underset{#1}{\text{maximize }}}
\def\maxemphwrt[#1]{\underset{#1}{\text{\emph{maximize} }}}
\newtheorem{theorem}{Theorem}
\newtheorem{remark}{Remark}
\newtheorem{proposition}{Proposition}
\newtheorem{lemma}{Lemma}
\newtheorem{definition}{Definition}
\newcommand{\abs}[1]{\left|#1\right|}
\def\by{{\bf y}}
\def\RN{{\mathbb{N}}}
\newcommand{\misparam}{\theta}
\newcommand{\expop}{\mathbb{E}}
\newcommand{\inharm}{\Delta} 
\newcommand{\missignal}{\mu} 
\newcommand{\signal}{x} 
\newcommand{\noiseysignal}{y} 
\newcommand{\misnoise}{w} 
\newcommand{\noise}{e} 
\newcommand{\wavediff}{\xi} 
\newcommand{\pitch}{\omega} 
\newcommand{\truefreq}{\tilde{\omega}} 
\newcommand{\amp}{r} 
\newcommand{\trueamp}{\tilde{r}} 
\newcommand{\phase}{\phi} 
\newcommand{\truephase}{\tilde{\phi}} 
\newcommand{\var}{\sigma^2} 
\newcommand{\truevar}{\tilde{\sigma}^2} 
\newcommand{\misscale}{\alpha}
\newcommand{\phasediff}{\breve{\phi}}
\newcommand{\freqdiff}{\breve{\omega}}
\newcommand{\fim}{F}
\newcommand{\extrahessian}{\tilde{F}}
\newcommand{\rreal}{\mathfrak{R}}
\newcommand{\iimag}{\mathfrak{I}}
\newcommand{\firstelement}{\eta}
\newcommand{\signalpdf}{f}
\newcommand{\modellikelihood}{\mathcal{L}}
\newcommand{\mcrlb}{\mathrm{MCRLB}}
        \def\fps@eqnfloat{!t}
        \def\ftype@eqnfloat{4}
        \newenvironment{eqnfloat*}
               {\@dblfloat{eqnfloat}}
               {\end@dblfloat}
\begin{document}
\title{On Harmonic Approximations of Inharmonic Signals}
\name{Filip Elvander$^*$, Jie Ding$^{*,\dagger}$, Andreas Jakobsson$^*$\thanks{This work was supported in part by the Swedish Research Council, Carl Trygger's foundation, and the Royal Physiographic Society in Lund.}}

\address{
$^*$Div. of Mathematical Statistics, Lund University, Sweden\\
$^\dagger$College of Underwater Acoustic Engineering, Harbin Engineering University, China\\
emails: \{filip.elvander, andreas.jakobsson\}@matstat.lu.se,  dingjie@hrbeu.edu.cn\vspace{-4pt}}

\maketitle
\begin{abstract}
In this work, we present the misspecified Gaussian Cram\'er-Rao lower bound for the parameters of a harmonic signal, or pitch, when signal measurements are collected from an almost, but not quite, harmonic model. For the asymptotic case of large sample sizes, we present a closed-form expression for the bound corresponding to the pseduo-true fundamental frequency. Using simulation studies, it is shown that the bound is sharp and is attained by maximum likelihood estimators derived under the misspecified harmonic assumption. It is shown that misspecified harmonic models achieve a lower mean squared error than correctly specified unstructured models for moderately inharmonic signals. Examining voices from a speech database, we conclude that human speech belongs to this class of signals, verifying that the use of a harmonic model for voiced speech is preferable.
%
%
\end{abstract}
\vspace{2mm}
\begin{keywords}
Fundamental frequency estimation, inharmonicity, misspecified Cram\'er-Rao lower bound
\end{keywords}
%
\section{Introduction}
Signals displaying harmonic structures arise in a wide set of applications, ranging from speech processing \cite{NorholmJC16_24} to machinery fault detection \cite{Randall11}, with the fundamental frequency, or pitch \cite{ChristensenJ09}, often being used as a characterizing feature for the signal \cite{LittleMHSR09_56}. The problem of finding statistically efficient, as well as computationally feasible \cite{NielsenJJCJ17_135}, estimators of the fundamental frequency constitutes an active field of research, including recent efforts for addressing multi-pitch signals \cite{ElvanderSJ17_25,AdalbjornssonJC15_109}. The assumption underlying the derivation of such methods is that of perfect harmonicity, i.e., the frequencies of the sinusoids constituting each pitch group should be exact integer multiples of a corresponding fundamental \cite{ChristensenJ09}. However, for some signal sources, this harmonic relationship is only approximate, i.e., the sinusoidal frequencies may deviate slightly from the postulated harmonic structure. Such discrepancy, referred to as inharmonicity, is an inherent property of the sound produced by stringed musical instruments, being caused by the stiffness of the vibrating strings \cite{Fletcher62_36}. Also, the voiced part of human speech often displays some inharmonicity, albeit with no apparent structure \cite{GeorgeS97_5}. Although methods for estimating the parameters of inharmonic signals have been proposed, either by exploiting parametric models \cite{ZhangCJM10_18,ButtASJ13_icassp} or by using robust distance measures \cite{ElvanderAKJ17_icassp}, little attention has been directed to analyzing the achievable estimation accuracy when applying estimators derived under the perfectly harmonic assumption. That is, how well, in terms of variance and mean squared error, may the parameters of an inharmonic signal be estimated by harmonic estimators? This work aims to address this question by considering this problem within the framework of misspecified estimation (see, e.g., \cite{FortunatiGGR17_34} for an overview). This framework allows for finding the so-called pseudo-true parameters of an assumed model, which are the expected values of unbiased estimators derived under the assumed model, when applied to actual signal measurements. In particular, this allows for defining a pseudo-true value of the fundamental frequency of a harmonic signal, even though this does not exist in a strictly physical sense. Furthermore, a lower bound on the variance of misspecified estimators may be found by considering an extension of the Cram\'er-Rao lower bound (CRLB), referred to as the misspecified CRLB (MCRLB) \cite{RichmondH15_63}. Taken together, this allows for finding a lower bound on the mean squared error of pitch estimators when applied to inharmonic signals, yielding a means of quantifying the loss of performance caused by the presence of inharmonicity.
Under the assumption of signal observations in additive Gaussian noise, we here determine the pseudo-true parameters of misspecified harmonic models when used for approximating inharmonic measurements. We also present the MCRLB for the corresponding parameters and show that the bound for the pseudo-true fundamental frequency can be found in closed form asymptotically. The theoretical findings are validated using numerical simulations, showing that the derived bounds are sharp, i.e., attained by the maximum likelihood estimator (MLE) derived under the harmonic assumption. The theoretical results support the heuristic that harmonic approximations of slightly inharmonic signals are not only computationally preferable, but also statistically superior to unstructured but exact models.
Using the Keele Pitch Reference database \cite{PlanteMA95_eurospeech}, we evaluate the typical inharmonicity found in human speech, concluding that the deviations are generally sufficiently small so that it is preferable to exploit the harmonic structure, even if imperfect, when estimating the pitch.

\section{Signal model and misspecification}
Consider the measured signal\footnote{Here, for generality, we will consider the complex-valued representation, noting that this can easily be formed as the discrete-time analytical version of a real-valued signal  \cite{Marple99_47}.}
\begin{align}\label{eq:sine_model}
	y_t = \signal_t + e_t= \sum_{k=1}^K \trueamp_k e^{i\truephase_k + i\truefreq_k t} + \noise_t,
\end{align}
for $t = 0,1,\ldots,N-1$, for $N\in \RN$, where $\trueamp_k>0$, $\truephase_k \in [0,2\pi)$, $\truefreq_k \in [0,2\pi)$, and $\noise_t$ is a circularly symmetric white Gaussian noise with variance $\truevar$. Further, assume that the sinusoidal frequencies $\truefreq_k$ satisfy
\begin{align}
	\truefreq_k = \pitch k + \inharm_k
\end{align}
for some $\pitch \in [0,2\pi)$. The offsets $\inharm_k$ are referred to as the inharmonicity parameters, as, for the case $\inharm_k = 0$, for $k= 1,\ldots,K$, the signal is perfectly harmonic. This type of quasi-periodic structure is observed in many forms of signals, such as the voiced part of human speech \cite{GeorgeS97_5} and in the sound produced by stringed musical instruments. For the latter case, a commonly utilized model to describe $\truefreq_k$ is
\begin{align}\label{eq:piano_model}
	\truefreq_k = \pitch k\sqrt{1+\beta k^2},
\end{align}
where $\beta \geq 0$ is the string stiffness parameter \cite{Fletcher62_36}. Works on finding high accuracy estimates of $\left\{ \truefreq_k \right\}_{k=1}^K$ by incorporating knowledge of the almost harmonic signal structure in \eqref{eq:sine_model} typically rely on parametric models for the sinusoidal frequencies, such as \eqref{eq:piano_model} \cite{ChristensenJ09}.
In such settings, potential gains are achieved by using fewer non-linear parameters than the $K$ needed for an unstructured sinusoidal model. In contrast, we herein seek to analyze and quantify the loss in estimation accuracy incurred by assuming a rigid harmonic model. That is, we seek to approximate the signal in \eqref{eq:sine_model} by
\begin{align}\label{eq:pitch_model}
	\noiseysignal_t = \missignal_t + \misnoise_t= \sum_{k=1}^K \amp_k e^{i\phase_k + ik \pitch t} + \misnoise_t,
\end{align}
where $\misnoise_t$ is a circularly symmetric white Gaussian noise with variance $\var$, and where $\pitch$ is the fundamental frequency. As may be noted that, from an implementation point of view, 
this model is preferable as it has only one non-linear parameter, i.e., the fundamental frequency $\pitch$. However, this simplicity may be expected to come at a price. For example, if considered estimates of $\left\{ \truefreq_k \right\}_{k=1}^K$, the sequence $\left\{ k\hat{\pitch} \right\}_{k=1}^K$ may be biased for estimators $\hat{\pitch}$ that are unbiased under \eqref{eq:pitch_model}. Also, lower bounds on the variance of estimators $\hat{\pitch}$ may be expected to be different from bounds corresponding to the model \eqref{eq:pitch_model} such as, e.g., the CRLB. In the following section, we aim to address these issues by considering the concept of the pseudo-true parameters of \eqref{eq:pitch_model} and their misspecified CRLB (MCRLB).
%
%
\section{Pseudo-true parameters and performance bounds}
Although the model \eqref{eq:sine_model} does not have a fundamental frequency $\pitch$, one may still define such a concept through the use of pseudo-true parameters. Specifically, letting the parameters of the assumed harmonic model be
\begin{align}
	\misparam = \left[\begin{array}{ccccccc} \pitch & \phase_1 & \ldots & \phase_K & \amp_1 & \ldots & \amp_K\end{array}\right]^T,
\end{align}
one may consider the following definition.
%
\begin{definition}[Pseudo-true parameter\cite{FortunatiGGR17_34}] 
	Consider a signal sample $\by$ with probability density function $\signalpdf$. For a likelihood $\modellikelihood$, parametrized by the parameter vector $\misparam$, the pseudo-true parameter, $\theta_0$, is defined as
	\begin{align} \label{eq:pseudo_true}
		\misparam_0 = \argminwrt[\misparam] -\mathbb{E}_{\signalpdf}\left( \log {\modellikelihood}(\by;\misparam)  \right).
	\end{align}
\end{definition}
%
Here, it may be noted that the pseudo-true parameter $\misparam_0$ minimizes the Kullback-Leibler divergence between the distribution of the actual measurement, i.e., $\signalpdf$, and the distribution of the model, encoded in the parametric likelihood $\modellikelihood$. Interestingly, it can be shown that the MLE corresponding to the misspecified model $\modellikelihood$ converges to the pseudo-true parameter \cite{FortunatiGGR17_34}. That is, as the number of samples from \eqref{eq:sine_model} tend to infinity, the maximumm likelihood estimator (MLE) derived under \eqref{eq:pitch_model} tends to the pseudo-true parameter, making this an applicable definition of fundamental frequency for practical purposes.
For the case of estimating the parameters of \eqref{eq:pitch_model} from measurements from \eqref{eq:sine_model}, the following proposition holds.
%
%
\begin{proposition}[Pseudo-true parameter]
The pseudo-true parameter for the pitch model in \eqref{eq:pitch_model} is given by
\begin{align} \label{eq:pseudo_true_param}
	\misparam_0 = \argminwrt[\misparam] \sum_{t=0}^{N-1} \abs{\signal_t - \missignal_t(\misparam)}^2,
\end{align}
and the pseudo-true variance is given by
\begin{align}
	\var = \truevar + \frac{1}{N} \sum_{t=0}^{N-1} \abs{\wavediff_t(\misparam_0)}^2
\end{align}
where $\wavediff_t(\misparam) = \missignal_t(\misparam) - x_t$ is the expected difference in waveform.
\end{proposition}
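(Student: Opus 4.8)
The plan is to start directly from the definition of the pseudo-true parameter in \eqref{eq:pseudo_true} and exploit the Gaussian structure of the assumed model \eqref{eq:pitch_model}. The first step is to write out the negative log-likelihood explicitly. Since $\misnoise_t$ is circularly symmetric white Gaussian noise with variance $\var$, the density of $\by$ under the model is a product over $t$, and so
\begin{align}
-\log \modellikelihood(\by;\misparam) = N\log(\pi\var) + \frac{1}{\var}\sum_{t=0}^{N-1} \abs{y_t - \missignal_t(\misparam)}^2.
\end{align}
Here I treat $\var$ as one of the free parameters to be optimized, alongside the deterministic signal parameters in $\misparam$.

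The key observation is that the expectation in \eqref{eq:pseudo_true} is taken under the \emph{true} distribution $\signalpdf$ of $\by$, i.e., $y_t = x_t + e_t$ with $e_t$ true noise of variance $\truevar$. So I would compute $\expop_{\signalpdf}\abs{y_t - \missignal_t(\misparam)}^2$ by splitting $y_t - \missignal_t = (x_t - \missignal_t) + e_t$. Since the true noise $e_t$ is zero-mean and independent across $t$, the cross term vanishes in expectation and I obtain
\begin{align}
\expop_{\signalpdf}\abs{y_t - \missignal_t(\misparam)}^2 = \abs{x_t - \missignal_t(\misparam)}^2 + \truevar,
\end{align}
recognizing the first term as $\abs{\wavediff_t(\misparam)}^2$. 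Summing over $t$ gives the expected negative log-likelihood as $N\log(\pi\var) + \frac{1}{\var}\left(N\truevar + \sum_t \abs{\wavediff_t(\misparam)}^2\right)$.

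The two claims then separate cleanly. For the pseudo-true signal parameters, I note that $\var$ enters only as a positive multiplicative factor $1/\var$ on the sum $\sum_t \abs{\wavediff_t(\misparam)}^2$ (the $N\log(\pi\var)$ and $N\truevar/\var$ terms do not depend on $\misparam$), so for any fixed $\var>0$ the minimizer over $\misparam$ is exactly the least-squares fit \eqref{eq:pseudo_true_param}, independent of $\var$. For the pseudo-true variance, I would substitute the optimal signal parameters and minimize the resulting expression over $\var$ alone: differentiating $N\log\var + \frac{1}{\var}(N\truevar + \sum_t \abs{\wavediff_t(\misparam_0)}^2)$ with respect to $\var$, setting the derivative to zero, and solving yields precisely $\var = \truevar + \frac{1}{N}\sum_t \abs{\wavediff_t(\misparam_0)}^2$.

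The main obstacle, though minor, is the interplay between the two optimizations: one must justify that the joint minimizer over $(\misparam,\var)$ decouples, so that plugging the $\misparam$-optimal value into the $\var$-optimization is legitimate. This follows because, as noted, the $\misparam$-minimizer is the same for every fixed $\var$, so the profiled objective in $\var$ uses $\misparam_0$ regardless; I would make this decoupling explicit to close the argument cleanly.
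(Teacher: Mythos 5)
Your proof is correct and is precisely the argument the paper compresses into its one-line proof ("as both the assumed and true distributions are Gaussian, the result follows directly"): writing the Gaussian negative log-likelihood, taking the expectation under the true distribution so the cross term vanishes, and observing that the minimization over $\misparam$ and $\var$ decouples into a least-squares fit and a closed-form variance. You have simply made explicit the details the authors chose to omit, including the (minor but worthwhile) decoupling justification.
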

\begin{proof}
As both the assumed and true distributions are Gaussian, the result follows directly.
\end{proof}
%
Thus, the pseudo-true parameter vector $\misparam_0$ minimizes the $\ell_2$-distance between the assumed and actual signal waveforms. It may be further noted that $\misparam_0$ is unique for finite $N$. With this, we may conclude that the expected biases for estimates of the sinusoidal frequencies are $\left\{ k\pitch_0 - \truefreq_k \right\}_{k=1}^K$. However, in order to find the mean squared error (MSE) for such estimators, we also require the estimator variance. A lower bound on this variance is given by the MCRLB. Specifically, the following theorem from \cite{RichmondH15_63} holds. 
%
%
%
\begin{theorem} \label{th:mrclb}
Let $\hat{\misparam}$ be an estimator of $\misparam_0$ that is unbiased under $\signalpdf$. Then,
\begin{align} \label{eq:mrclb}
	\expop_\signalpdf\left( (\hat{\misparam}-\misparam_0)(\hat{\misparam}-\misparam_0)^T  \right) \succeq \truevar A(\misparam_0)^{-1}\fim(\misparam_0) A(\misparam_0)^{-1}
\end{align}
where
\begin{align*}
	\fim(\misparam) \!=\!\frac{2\truevar}{(\var)^2}\!\sum_{t=0}^{N-1}\!\nabla_\misparam \missignal_t^\rreal(\misparam) \nabla_\misparam \missignal_t^\rreal(\misparam)^T\!+\!\nabla_\misparam \missignal_t^\iimag(\misparam) \nabla_\misparam \missignal_t^\iimag(\misparam)^T
\end{align*}
and $A(\misparam) = -\frac{\var}{\truevar}\fim(\misparam) - \extrahessian(\misparam)$, with
\begin{align*}
	\extrahessian(\misparam)\!= \!\frac{2}{\var}\sum_{t=0}^{N-1} \left(\wavediff_t^\rreal(\misparam) \nabla^2_\misparam \missignal_t^\rreal(\misparam)  + \wavediff_t^\iimag(\misparam) \nabla^2_\misparam \missignal_t^\iimag(\misparam) \right).
\end{align*}
Here, $(\cdot)^\rreal$ and $(\cdot)^\iimag$ denote the real and imaginary parts, respectively. 
The proof may be found in \cite{RichmondH15_63}.
\end{theorem}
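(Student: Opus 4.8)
Since the statement is quoted from \cite{RichmondH15_63} as the specialization to the complex-Gaussian model of the general misspecified bound, the plan is to (i) reconstruct the general sandwich inequality and then (ii) carry out the Gaussian specialization that produces the explicit $\fim$, $A$, and $\extrahessian$. The general bound has the form $\expop_\signalpdf[(\hat\misparam-\misparam_0)(\hat\misparam-\misparam_0)^T]\succeq A(\misparam_0)^{-1}B(\misparam_0)A(\misparam_0)^{-1}$, where $A(\misparam_0)=\expop_\signalpdf[\nabla^2_\misparam\log\modellikelihood(\by;\misparam_0)]$ is the expected Hessian of the misspecified log-likelihood and $B(\misparam_0)=\expop_\signalpdf[\nabla_\misparam\log\modellikelihood(\by;\misparam_0)\nabla_\misparam\log\modellikelihood(\by;\misparam_0)^T]$ is the covariance of its score, both taken under the true density $\signalpdf$. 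The task is therefore to establish this inequality and then to evaluate $A$ and $B$ for the model \eqref{eq:pitch_model}.

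First I would prove the general inequality by a matrix covariance (Cauchy--Schwarz) argument. The two ingredients are: that the expected score vanishes at the pseudo-true point, $\expop_\signalpdf[\nabla_\misparam\log\modellikelihood(\by;\misparam_0)]=0$, which is exactly the first-order optimality condition of the Kullback--Leibler minimization in \eqref{eq:pseudo_true} defining $\misparam_0$ (interchanging gradient and expectation); and an expression for the cross-covariance $\expop_\signalpdf[(\hat\misparam-\misparam_0)\nabla_\misparam\log\modellikelihood(\by;\misparam_0)^T]$ in terms of $A(\misparam_0)$ and $B(\misparam_0)$. The latter is obtained by embedding $\signalpdf$ in an auxiliary parametric family whose pseudo-true parameter traces $\misparam$, differentiating the mean-square-unbiasedness constraint $\expop_\signalpdf[\hat\misparam]=\misparam_0$ under the integral sign, and invoking the score identity. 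Collecting $\hat\misparam-\misparam_0$ and the score into a single zero-mean vector and demanding that its joint covariance be positive semidefinite then yields the sandwich bound through a Schur-complement step.

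For the specialization I would write the circular complex-Gaussian log-likelihood of \eqref{eq:pitch_model}, namely $\log\modellikelihood(\by;\misparam)=\text{const}-\frac{1}{\var}\sum_{t}\abs{\noiseysignal_t-\missignal_t(\misparam)}^2$, and split each residual into its real and imaginary parts. Differentiating once yields a score linear in the residuals, while differentiating twice yields a Hessian containing a quadratic curvature term in $\nabla_\misparam\missignal_t$ and a term proportional to the residual times $\nabla^2_\misparam\missignal_t$. Taking the expectation under $\signalpdf$ is where the misspecification enters: since $\expop_\signalpdf[\noiseysignal_t-\missignal_t(\misparam_0)]=\signal_t-\missignal_t(\misparam_0)=-\wavediff_t(\misparam_0)$, the residual term does not vanish but contributes $-\extrahessian(\misparam_0)$, while the quadratic term contributes $-\frac{\var}{\truevar}\fim(\misparam_0)$, giving $A(\misparam_0)=-\frac{\var}{\truevar}\fim(\misparam_0)-\extrahessian(\misparam_0)$. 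Evaluating instead the score covariance, the equal split of the true noise power $\truevar/2$ between real and imaginary parts collapses the cross terms and leaves precisely the stated $\fim(\misparam_0)$.

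The main obstacle is the cross-covariance identity in the second step: unlike the well-specified case, $\signalpdf$ does not depend on $\misparam$, so the usual ``differentiate the unbiasedness constraint'' device cannot be applied directly and must be replaced by the auxiliary-family construction, together with the regularity conditions that license differentiation under the integral. The Gaussian bookkeeping of the third step is routine but delicate, since one must retain the real/imaginary decomposition throughout and track which residual expectations vanish and which reduce to $-\wavediff_t(\misparam_0)$; correctly attributing the surviving curvature terms to $\extrahessian$ rather than to $\fim$ is where sign and scaling errors are easiest to make.
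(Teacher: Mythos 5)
The paper itself contains no proof of this theorem --- it simply defers to \cite{RichmondH15_63} --- so your attempt has to be judged as a reconstruction of that reference's argument, and in outline it is exactly that argument: the covariance-inequality/Schur-complement sandwich $A^{-1}BA^{-1}$, with $A$ the expected Hessian and $B$ the covariance of the misspecified score, both taken under $\signalpdf$, followed by evaluation of these two matrices for the complex-Gaussian model. Your Gaussian bookkeeping is in fact correct: the expected Hessian of $\log\modellikelihood$ under $\signalpdf$ is $-\frac{\var}{\truevar}\fim(\misparam_0)-\extrahessian(\misparam_0)$, i.e.\ the stated $A(\misparam_0)$, and the score covariance is the stated $\fim(\misparam_0)$.

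The gap is that this derivation terminates at $\expop_\signalpdf\left((\hat{\misparam}-\misparam_0)(\hat{\misparam}-\misparam_0)^T\right)\succeq A(\misparam_0)^{-1}\fim(\misparam_0)A(\misparam_0)^{-1}$, which is \emph{not} the displayed inequality \eqref{eq:mrclb}: the statement carries an extra factor $\truevar$ in front, and you assert that your computation lands ``precisely'' on the stated quantities without ever reconciling this prefactor. The discrepancy is not cosmetic. In the well-specified case ($\var=\truevar$, $\extrahessian=0$, $A=-\fim$) your bound reduces to $\fim^{-1}$, the true CRLB --- consistent with the remark immediately following the theorem --- whereas the stated right-hand side reduces to $\truevar\,\fim^{-1}$, i.e.\ $\truevar$ times the CRLB, which for $\truevar>1$ would exceed the variance actually attained by the MLE and hence cannot be a valid lower bound; likewise the paper's appendix computes the MCRLB for $\pitch_0$ as the first diagonal element of $A^{-1}\fim A^{-1}$ with no prefactor, which is what makes Proposition~\ref{prop:asymp_mcrlb}'s well-specified limit $\truevar/C$ agree with the known pitch CRLB. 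So your algebra is the internally consistent version and the prefactor in \eqref{eq:mrclb} is evidently a transcription slip --- but a blind proof must surface and resolve that inconsistency (either derive the prefactor or show it cannot be there), not silently claim agreement with a formula the computation contradicts. A secondary weakness: the crux of the general bound, namely that MS-unbiasedness forces the cross-covariance $\expop_\signalpdf\left((\hat{\misparam}-\misparam_0)\nabla_\misparam\log\modellikelihood(\by;\misparam_0)^T\right)$ to equal $-A^{-1}B$ so that the Schur complement produces the sandwich, is only gestured at (``auxiliary family'', ``the score identity''); differentiating the unbiasedness constraint along an auxiliary family of true densities produces that family's own score, and the step that replaces it by the \emph{misspecified} score (via the implicit-function relation defining the pseudo-true map) is precisely where the work of \cite{RichmondH15_63} lies and is missing from your sketch.
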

%
%
The MCRLB is given by the diagonal of the right-hand side of \eqref{eq:mrclb} and thus provides a lower bound on the variance of any estimator of $\theta_0$ that is unbiased under $\signalpdf$. It may be noted that for the case of $\inharm_k = 0 $, for $k = 1,\ldots,K$, i.e., when the signal in \eqref{eq:sine_model} is perfectly harmonic, $\extrahessian(\misparam) = 0$, $A(\misparam_0) = -\fim(\misparam_0)$, and the MCRLB coincides with the CRLB of a harmonic signal. Further, as $N \to \infty$, one may express the MCRLB corresponding to the pseudo-true fundamental frequency in closed form, 
as detailed below. 
%
%
%
\begin{figure}[t!]
        \centering
            \includegraphics[width=0.45\textwidth]{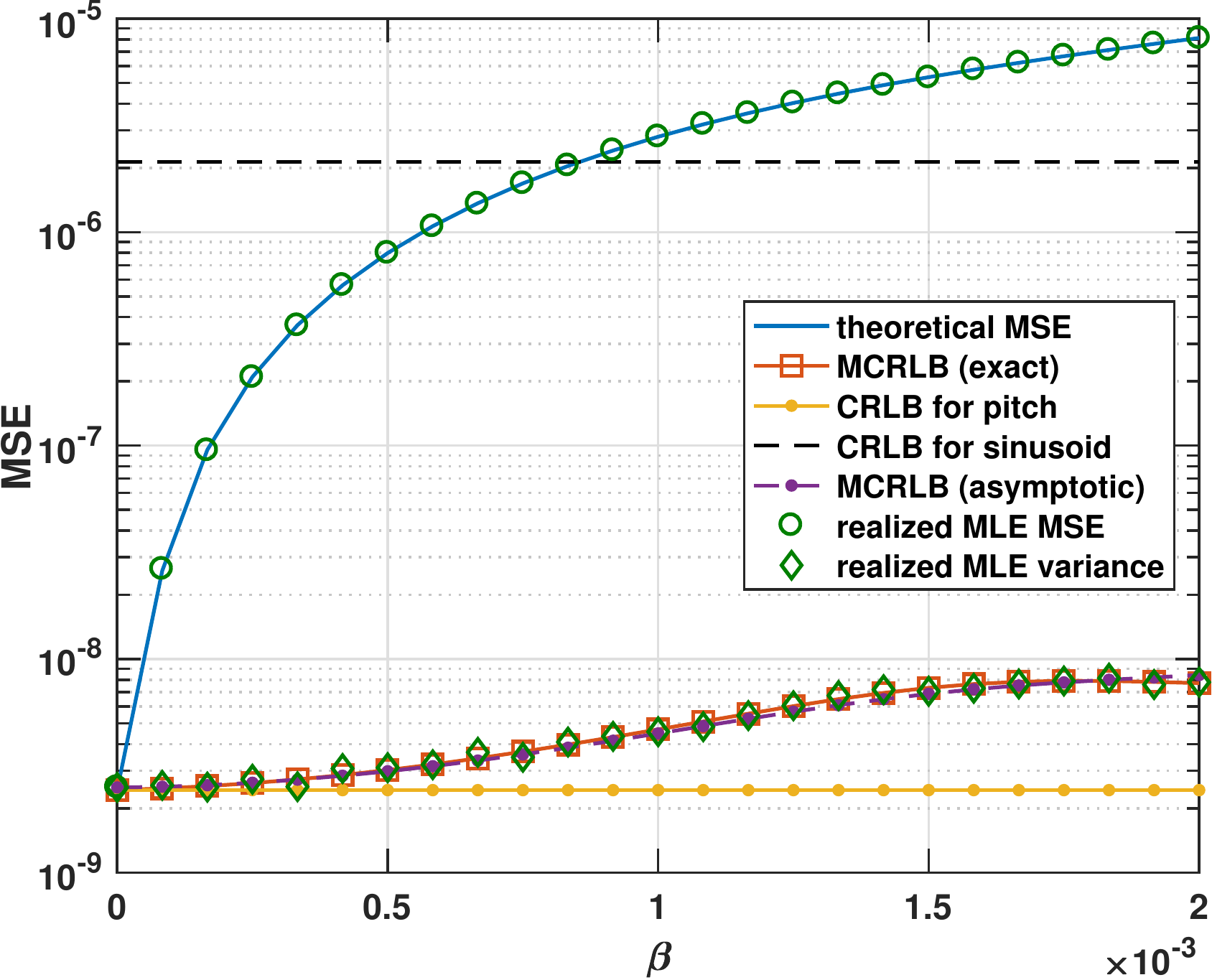}
           \caption{Mean squared error when varying the inharmonicity parameter $\beta$, for SNR 10dB and $N = 200$.}
            \label{fig:mse_vary_beta}
\end{figure}
%
\begin{proposition}[Asymptotic MCRLB] \label{prop:asymp_mcrlb}
Let the pseudo-true parameter be
\begin{align*}
	\misparam_0 = \left[\begin{array}{ccccccc} \pitch_0 & \phase_1 & \ldots & \phase_K & \amp_1 & \ldots & \amp_K\end{array}\right]^T.
\end{align*}
Then, as $N\to \infty$, the asymptotic $\mcrlb$ for the pseudo-true fundamental frequency $\pitch_0$ is given by
\begin{align}
	\mcrlb(\pitch_0) = \truevar \frac{C + E}{\left(C-E + Z + D   \right)^2}
\end{align}
where $C = \frac{N(N^2-1)\sum_{k=1}^K k^2 \amp_k^2}{6}$, and
\begin{align*}
	Z &= -2 \sum_{k=1}^K k^2\amp_k^2 \frac{N(N-1)(2N-1)}{6} \\
	&\quad + 2 \sum_{k=1}^K \sum_{t=0}^{N-1} k^2\amp_k\trueamp_k t^2 \cos(\phasediff_k+\freqdiff_kt) \\
	\!D\!&=\!2(N\!-\!1)\!\Bigg[\!\frac{N(\!N\!-\!1\!)}{2}\!\sum_{k=1}^K\!k^2\amp_k^2\!-\!\sum_{k=1}^K\!\sum_{t=0}^{N-1} \!k^2\amp_k\trueamp_k\cos(\phasediff_k\!+\!\freqdiff_k t)\!\Bigg]\\
	E &= \frac{2}{N}\sum_{k=1}^K k^2 \trueamp_k^2 \left( \sum_{t=0}^{N-1} t \sin(\phasediff_k+\freqdiff_kt) \right)^2 \\ 
	&\quad+ \frac{2}{N}\sum_{k=1}^K k^2 \left( \trueamp_k \sum_{t=0}^{N-1} t \cos(\phasediff_k+\freqdiff_kt)  - \amp_k\frac{N(N-1)}{2}\right)^2
\end{align*}
where $\phasediff_k = \phase_k - \truephase_k$ and $\freqdiff_k = k\pitch_0 - \truefreq_k$, for $k = 1,\ldots,K$.
\end{proposition}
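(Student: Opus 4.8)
The plan is to obtain $\mcrlb(\pitch_0)$ as the $(1,1)$ entry of the matrix bound in \eqref{eq:mrclb}, namely $\mcrlb(\pitch_0)=\truevar\,[A(\misparam_0)^{-1}\fim(\misparam_0)A(\misparam_0)^{-1}]_{11}$, and to evaluate this entry in closed form as $N\to\infty$. Both $\fim$ and $A$ are symmetric, so introducing $\mathbf{v}=A(\misparam_0)^{-1}\mathbf{e}_1$, where $\mathbf{e}_1$ is the canonical basis vector selecting $\pitch$, reduces the target to the scalar $\truevar\,\mathbf{v}^T\fim(\misparam_0)\mathbf{v}$ with $\mathbf{v}$ the solution of $A(\misparam_0)\mathbf{v}=\mathbf{e}_1$. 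The first, mechanical step is to differentiate $\missignal_t(\misparam)=\sum_k\amp_k e^{i\phase_k+ik\pitch t}$, producing the gradients $\nabla_\misparam\missignal_t^\rreal,\nabla_\misparam\missignal_t^\iimag$ and Hessians $\nabla_\misparam^2\missignal_t^\rreal,\nabla_\misparam^2\missignal_t^\iimag$ as trigonometric polynomials in $t$, and to substitute these, together with the residual $\wavediff_t(\misparam_0)=\missignal_t(\misparam_0)-\signal_t$ from the preceding proposition, into the definitions of $\fim$ and $\extrahessian$ in Theorem~\ref{th:mrclb}.

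The conceptual core is the $N\to\infty$ reduction of the resulting sums. Ordering $\misparam_0$ as $(\pitch_0,\phase_1,\dots,\phase_K,\amp_1,\dots,\amp_K)$, I would invoke the asymptotic orthogonality of distinct sinusoids: normalized sums $N^{-1}\sum_t t^p e^{i(k-l)\pitch_0 t}$ vanish in the limit whenever $k\neq l$, so all cross-harmonic contributions drop and both $\fim$ and $\extrahessian$ become block diagonal across the harmonic index. Within a harmonic the model-model products collapse through $\sin^2+\cos^2=1$ to the bare power sums $\sum_t t^p$, which assemble the term $C$. In contrast, the products between the assumed sinusoid at frequency $k\pitch_0$ and the true sinusoid at $\truefreq_k$ do \emph{not} average out, because their difference frequency $\freqdiff_k=k\pitch_0-\truefreq_k$ is small; these leave the slowly varying beat factors $\cos(\phasediff_k+\freqdiff_k t)$ and $\sin(\phasediff_k+\freqdiff_k t)$ intact inside the residual-weighted Hessian $\extrahessian$, and it is these surviving sums that build $Z$, $D$ and $E$.

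The final step is the block inversion. At leading order $\pitch_0$ couples to the phases $\{\phase_k\}$ through $\fim$, while $\extrahessian$ adds further $\pitch_0$--$\phase_k$ and $\pitch_0$--$\amp_k$ couplings carried by the beat factors; hence solving $A(\misparam_0)\mathbf{v}=\mathbf{e}_1$ reduces to a Schur complement of $A(\misparam_0)$ over the joint phase-amplitude block. The first component of $\mathbf{v}$ is the reciprocal of that Schur complement, and the remaining components are proportional to it, so substituting into $\mathbf{v}^T\fim(\misparam_0)\mathbf{v}$ yields a ratio whose denominator is the squared Schur complement of $A(\misparam_0)$ and whose numerator is the matching quadratic form in $\fim(\misparam_0)$. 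Because $A=-(\var/\truevar)\fim-\extrahessian$ is not equal to $-\fim$ away from the well-specified point, the numerator (built from $\fim$) and the denominator (built from $A$) combine the same power and beat sums in \emph{different} proportions, which is exactly what produces $C+E$ in the numerator against $(C-E+Z+D)^2$ in the denominator.

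I expect the main obstacle to be the bookkeeping of this Schur-complement step in the presence of the beat factors: one must propagate the $\pitch_0$--$\phase_k$ and $\pitch_0$--$\amp_k$ entries of both $\fim$ and $\extrahessian$ through the inverse of the phase-amplitude block while retaining every contribution of the same asymptotic order, since partial cancellations between the $\fim$ and $\extrahessian$ couplings are what reshape the bare sums into the combinations appearing in $Z$, $D$ and $E$. A reliable guard against sign and normalization errors is to verify the reduction for $\inharm_k=0$, where $\extrahessian=0$ and $Z=E=0$ so that the bound collapses to the correctly specified harmonic CRLB, consistent with the observation made below Theorem~\ref{th:mrclb}.
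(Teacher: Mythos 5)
Your overall route---reading off $\mcrlb(\pitch_0)$ as the $(1,1)$ entry of $\truevar A^{-1}\fim A^{-1}$, reducing $\fim$ and $A$ to a structured form as $N\to\infty$, and inverting via a Schur complement over the phase--amplitude block---is the same as the paper's, which phrases the structure as convergence to \emph{arrowhead} matrices and performs your Schur-complement step via the Sherman--Morrison--Woodbury formula. However, there is a genuine gap in how you justify that structure. You claim that asymptotic orthogonality of distinct sinusoids makes both $\fim$ \emph{and} $\extrahessian$ block diagonal across the harmonic index, with beat factors surviving only in the couplings to $\pitch_0$. Orthogonality does achieve this for $\fim$, but not for $\extrahessian$: its within-harmonic entries, e.g.\ the $(\phase_k,\phase_k)$ and $(\phase_k,\amp_k)$ entries, are residual-weighted sums of the form
\begin{align*}
	\sum_{t=0}^{N-1} \wavediff_t^\rreal\cos(\phase_k+k\pitch_0 t)+\wavediff_t^\iimag\sin(\phase_k+k\pitch_0 t),
\end{align*}
and since the residual $\wavediff_t$ itself contains components at the frequencies $k\pitch_0$ and $\truefreq_k$, these sums contain non-oscillating and slowly beating terms of order $N$ that do \emph{not} vanish under $1/N$ normalization. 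If you carried these terms along, the phase--amplitude block of $A(\misparam_0)$ would not be asymptotically diagonal, your Schur complement would pick up spurious residual contributions, and you would not land on the stated $C$, $E$, $D$, $Z$.

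What closes this gap in the paper is its Lemma~\ref{eq:lemma_arrowhead}: because $\misparam_0$ is by definition the minimizer of the least-squares criterion \eqref{eq:pseudo_true_param}, the first-order optimality condition gives $\sum_{t} \wavediff_t^\rreal\nabla_\misparam\missignal_t^\rreal + \wavediff_t^\iimag\nabla_\misparam\missignal_t^\iimag = 0$, and every second derivative of $\missignal_t$ not involving $\pitch$ is a $t$-independent real multiple of a first derivative (e.g.\ $\partial^2\missignal_t^\rreal/\partial\phase_k^2 = -\amp_k\,\partial\missignal_t^\rreal/\partial\amp_k$). Combining the two, the entire phase--amplitude block of $\extrahessian(\misparam_0)$ vanishes \emph{exactly}, for every finite $N$---by optimality of the pseudo-true parameter, not by orthogonality. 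This is the missing idea: with it, $\extrahessian$ enters only through its first row/column and $(1,1)$ entry ($z_\wavediff$ and $\firstelement_\wavediff$ in the paper's notation), which is precisely why the residual shows up only through $Z$, $D$, $E$ and why the denominator is the square of a scalar Schur complement $\rho=\firstelement - z^T(z./d)$. Your remaining steps (splitting the quadratic form into model and residual parts, which also requires verifying the cross-term identity $u_\missignal^T \fim\, u_\wavediff = 0$, and the sanity check that the well-specified case gives $Z=D=E=0$ and recovers the harmonic CRLB) then go through as in the paper.
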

\begin{proof}
	See appendix.
\end{proof}
%
%
%
\begin{figure}[t!]
        \centering
            \includegraphics[width=0.45\textwidth]{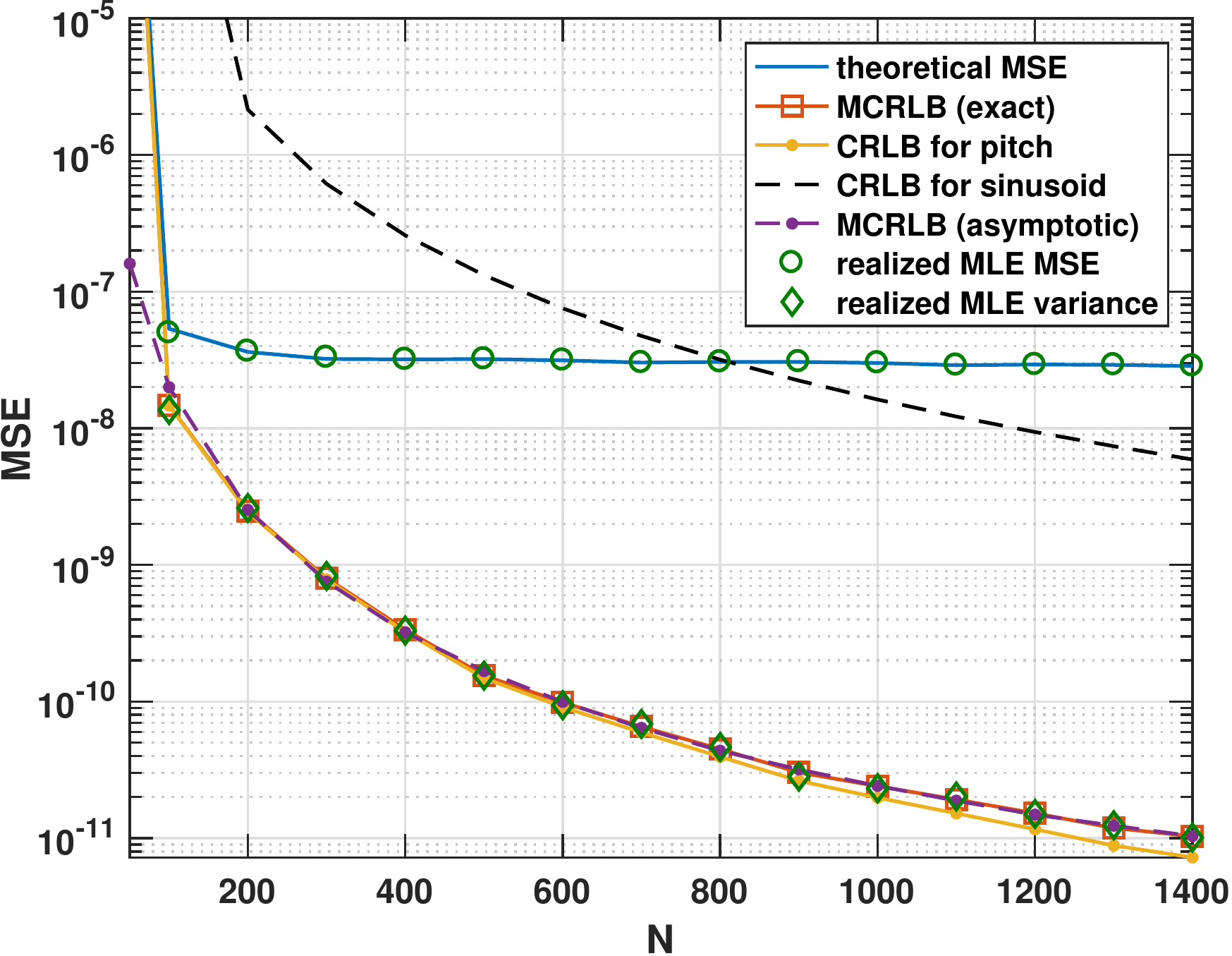}
           \caption{Mean squared error when varying the number of signal samples $N$, for $\beta = 10^{-4}$ and SNR 10dB.}
            \label{fig:mse_vary_N}
\end{figure}
%
\begin{remark}
It may here be noted that $\truevar/C$ corresponds to the asymptotic CRLB case for a pitch model in a correctly specified setting \cite{ChristensenJJ07_15}. It may also be noted that $Z = D = E = 0$ when the pseudo-true and true parameters coincide.
\end{remark}
With this, we may compute (asymptotic) lower bounds on the MSE for estimators of the sinusoidal frequencies $\truefreq_k$, derived under the assumption of the harmonic model \eqref{eq:pitch_model}. Specifically, these lower bounds are given by
\begin{align} \label{eq:mse}
	\expop_\signalpdf\left( (k\hat{\pitch} - \truefreq_k)^2 \right) \geq (k\pitch_0 - \truefreq_k)^2 + k^2\mcrlb(\pitch_0)
\end{align}
for $k = 1,\ldots,K$. In a practical estimation scenario, this is the expected performance obtained when applying harmonic estimators to almost harmonic signals \eqref{eq:sine_model}.
%
%
%
%
\begin{figure}[t!]
        \centering
            \includegraphics[width=0.45\textwidth]{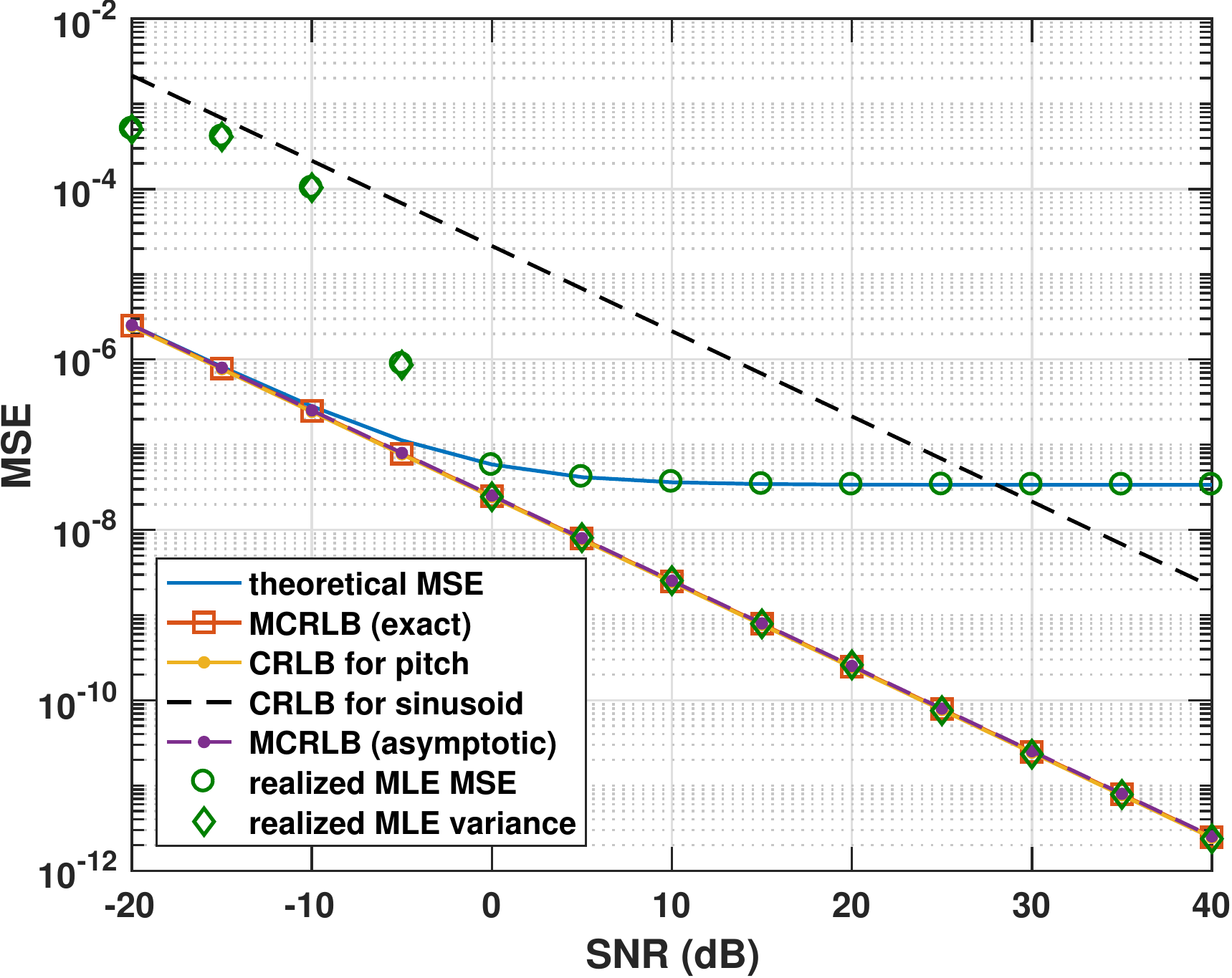}
            \vspace{-2mm}
           \caption{Mean squared error when varying the SNR, for $\beta = 10^{-4}$ and $N = 200$.}
            \label{fig:mse_vary_SNR}
\vspace{-4mm}
\end{figure}
%
%
%
%
%
%
%
%
%
%
\section{Numerical results}
\vspace{-1mm}
In this section, we consider the MSE and variance obtained for the estimate of the lowest sinusoidal frequency when applying the MLE \cite{ChristensenJ09} of the parameters of \eqref{eq:pitch_model} to measurements from \eqref{eq:sine_model}, for varying degrees of inharmonicity, sample sizes, and signal-to-noise-ratio (SNR). Here, SNR is defined as $\text{SNR} = \sum_{k=1}^K \trueamp_k^2/\truevar$.
For simplicity, we use the model in \eqref{eq:piano_model} for the frequencies. Thus, $\beta = 0$ corresponds to a perfectly harmonic model. We set $K = 10$, $\omega = \pi/40$, and use the sinusoidal amplitudes $\trueamp_k = e^{-\frac{1}{20}(k-K/2)^2}$, for $k = 1,\ldots,K$. The initial phases are chosen uniformly at random in $[0,2\pi)$. For each considered setting, i.e., for a certain $\beta$, $N$, and SNR, we conduct 1000 Monte Carlo simulations, from which the estimator MSE and variance for the lowest order sinusoid are estimated. Also, for reference, the CRLB for the unstructured model in \eqref{eq:sine_model}, as well as for the perfectly harmonic model, are provided.
Figure~\ref{fig:mse_vary_beta} shows the bound as a function of the inharmonicity parameter $\beta$. As may be seen, the MCRLB is orders of magnitude smaller than the CRLB for the sinusoidal model, even for large values of $\beta$. Also, for small values of $\beta$, the MSE if lower than the CRLB for the unstructured model, indicating that the misspecified model is expected to perform better for small deviations from the harmonic model. Figures~\ref{fig:mse_vary_N} and \ref{fig:mse_vary_SNR} considers varying sample size $N$ and SNR, respectively. As can be seen, the harmonic model here is expected to perform better in terms of MSE unless the number of samples or the SNR is large. The conclusion from this is quite intuitive; in adverse estimation scenarios, i.e., when the sample size is small or the noise level is high, one is expected to gain from exploiting the signal structure, even if it is only approximate. Conversely, for large number of samples and low noise levels, one is better off using the exact signal characteristics. As may be noted, the loss of performance for the misspecified model is caused by the systematic bias; the variance is considerably lower than for the sinusoidal model in all considered cases. It may also be noted that the asymptotic expression for the MCRLB corresponds well to the exact values, also for quite large inharmonicities.
%
\begin{figure}[t!]
        \centering
            \includegraphics[width=0.45\textwidth]{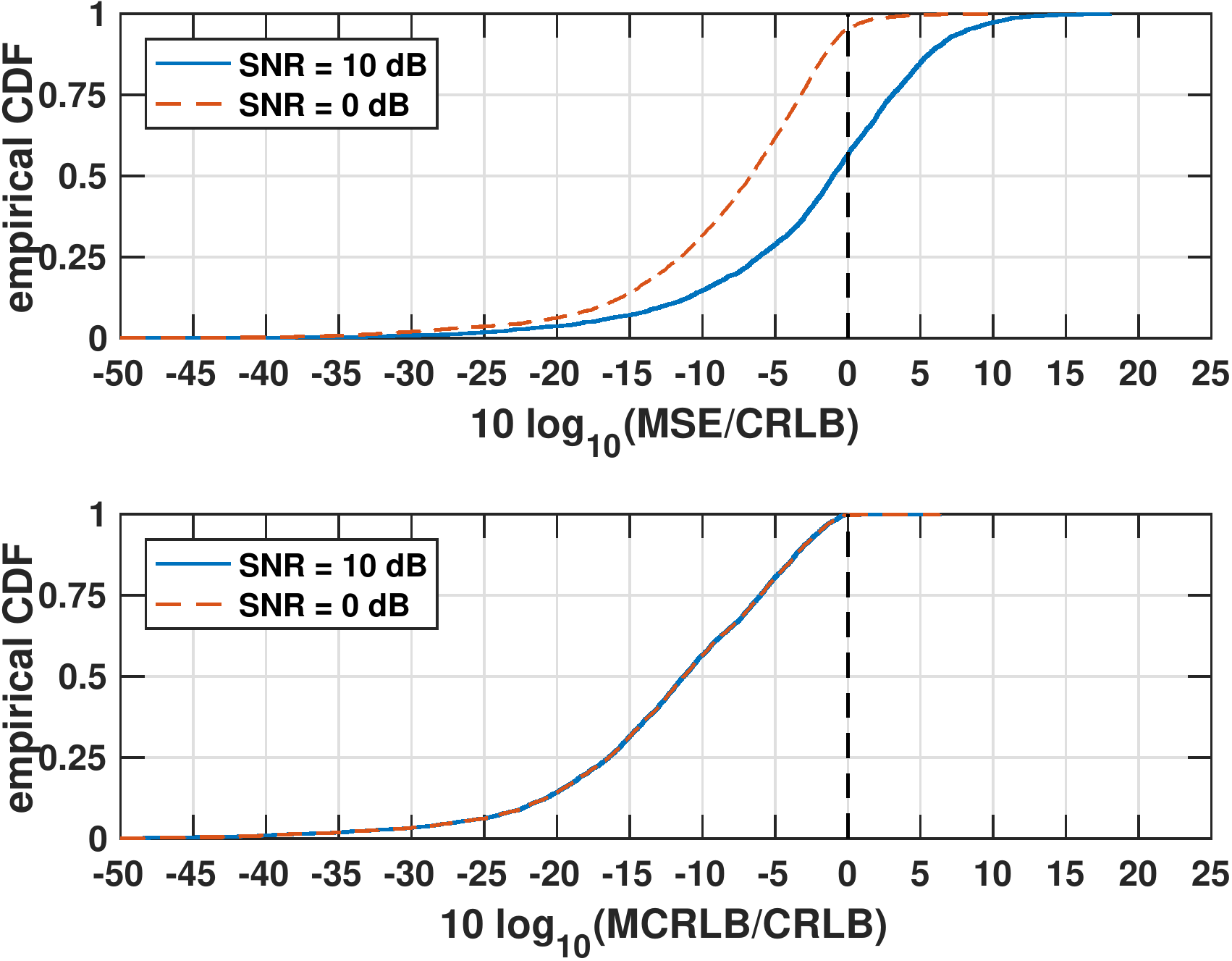}
            \vspace{-1mm}
           \caption{Empirical distribution function for ratio between the MSE and MCRLB for the potentially misspecified harmonic model and the CRLB for the unstructured sinusoidal model.}
            \label{fig:empirical_cdf}
\vspace{-5mm}
\end{figure}
%
In order to get an idea of the impact of these results on real data, we consider recordings of human speech from the Keele Pitch Reference database \cite{PlanteMA95_eurospeech}. 
We subdivide the voiced part of the recordings into frames of length 25.6 ms, and for each frame estimate the parameters of the sinusoidal model in \eqref{eq:sine_model} using the corresponding MLE. The obtained parameters are then used to compute the corresponding pseduo-true parameters, $\misparam_0$, as well as the MCRLB and theoretical MSE for the pseduo-true fundamental frequency. We then compare these quantities to the CRLB for the lowest sinusoidal frequency under the unstructured model in \eqref{eq:sine_model}. All bounds are computed for SNRs 0 and 10 dB, assuming  $N = 200$ signal samples. To avoid ambiguities, only signal frames with 3-10 sinusoidal components without missing harmonics are included. This results in sinusoidal parameters being estimated from 3655 signal frames, corresponding to 22\% of the total number of frames labeled as voiced. The empirical cumulative distribution functions (CDFs) for the resulting ratios are shown in Figure~\ref{fig:empirical_cdf}. As can be seen, the MSE for the (potentially) misspecified harmonic model is lower than the CRLB for the sinusoidal model in approximately 57\% and 90\% of the cases, for SNRs 10 and 0 dB, respectively, whereas the MCRLB is smaller for virtually all cases. It may be noted the the relative magnitude of the MCRLB and the sinusoidal CRLB is invariant under scaling of the noise power. This supports the heuristic of using harmonic models for potentially inharmonic measurements for voice data. It should be stressed that Figure~\ref{fig:empirical_cdf} is conservative in the sense of being constructed to the benefit of the sinusoidal model in \eqref{eq:sine_model}, with estimation errors propagation as to exaggerating the MSE of the misspecified inharmonic model.
\appendix
Here, we provide a proof of Proposition~\ref{prop:asymp_mcrlb}.
%
\begin{proof}
By Lemma~\ref{eq:lemma_arrowhead} below, $\frac{1}{N} \fim(\misparam_0)$ and $\frac{1}{N} A(\misparam_0)$ converge to arrowhead matrices. The structure for $\frac{1}{N} A(\misparam_0)$ is
\begin{align}
	\frac{1}{N} A(\misparam_0) = -\frac{1}{N\var} \begin{bmatrix}
		\firstelement & z^T \\
		z & \text{diag}(d)
	\end{bmatrix}
\end{align}
where $\firstelement = \firstelement_\missignal+ \firstelement_\wavediff$ and $z = z_\missignal + z_\wavediff$ with
\begin{align}
	\firstelement_\missignal&= 2\sum_{t=0}^{N-1} \left( \frac{\partial \missignal_t^\rreal }{\partial \pitch}  \right)^2 +  \left( \frac{\partial \missignal_t^\iimag }{\partial \pitch}  \right)^2  \\
	\firstelement_\wavediff &= 2\sum_{t=0}^{N-1} \wavediff_t^\rreal \frac{\partial^2 \missignal_t^\rreal}{\partial \pitch^2} + \wavediff_t^\iimag\frac{\partial^2 \missignal_t^\iimag}{\partial \pitch^2} \\
	d &= 2\sum_{t=0}^{N-1}\nabla_\misscale \missignal_t^\rreal \odot \nabla_\misscale \missignal_t^\rreal +\nabla_\misscale \missignal_t^\iimag \odot \nabla_\misscale \missignal_t^\iimag \\
	z_\missignal&= 2\sum_{t=0}^{N-1}\nabla_\misscale \missignal_t^\rreal \frac{\partial \missignal_t^\rreal}{\partial \pitch} + \nabla_\misscale \missignal_t^\iimag \frac{\partial \missignal_t^\iimag}{\partial \pitch}\\
	z_\wavediff &= 2\sum_{t=0}^{N-1}\wavediff^\rreal_t \frac{\partial}{\partial \pitch}\nabla_\misscale \missignal_t^\rreal + \wavediff_t^\iimag \frac{\partial}{\partial \pitch}\nabla_\misscale \missignal_t^\iimag,
\end{align}
where $\odot$ denotes the Hadamard product, and
\[
\alpha = \left[ \begin{array}{cccccc} \phase_1 & \ldots & \phase_K & \amp_1 & \ldots & \amp_K \end{array} \right]^T,
\]
with all derivatives being evaluated at $\misparam = \misparam_0$.
The inverse of $\frac{1}{N} A(\misparam_0)$ can then be written, using the Sherman-Morrison-Woodbury formula \cite{GolubV96}, as
\begin{align}
	\left( \frac{1}{N} A(\misparam_0) \right)^{-1} = -\var\begin{bmatrix}
		0 & 0\\ 0 & N\text{diag}(d)^{-1}
	\end{bmatrix} - \frac{N\var}{\rho}uu^T
\end{align}
where $u = u_\missignal + u_\wavediff$, with
\begin{align}
	 \rho &= \firstelement - z^T(z./ d) \\
	 u_\missignal &= \left[ \begin{array}{cc} -1 & (z_\missignal./ d)^T \end{array} \right]^T \\
	 u_\wavediff &= \left[ \begin{array}{cc} 0 & (z_\wavediff./ d)^T \end{array} \right]^T,
\end{align}
where $./$ denotes elementwise division, implying that $A(\misparam_0)^{-1}$ converges to
\begin{align}
	A(\misparam_0)^{-1} = -\var\begin{bmatrix}
		0 & 0 \\ 0 & \text{diag}(d)^{-1}
	\end{bmatrix} - \frac{\var}{\rho}uu^T.
\end{align}
As may be noted, the MCRLB corresponding to $\pitch$ is given by the first diagonal element of $A(\misparam)^{-1}\fim(\misparam)A(\misparam)^{-1}$. This element is given by the first element of the matrix
\begin{align}
	(\var)^2\frac{1}{\rho}u u^T F u \frac{1}{\rho}u^T = (\var)^2\left(\frac{1}{\rho^2} u^TFu \right) uu^T,
\end{align}
which, as the first element of $u u^T$ is 1, is $(\var)^2\frac{1}{\rho^2} u^TFu$. In the same sense, $ \fim(\misparam_0)$ converges to
\begin{align}
	\fim(\misparam_0) = \frac{\truevar}{(\var)^2}\begin{bmatrix}
		\firstelement_\missignal & z_\missignal^T \\
		z_\missignal & \text{diag}(d)
	\end{bmatrix}
\end{align}
It is readily verified that $u_\missignal^T \fim u_\wavediff = 0$, yielding
\begin{align}
	\frac{(\var)^2}{\rho^2} u^TFu &= (\var)^2\frac{1}{\rho^2} \left( u_\missignal^TFu_\missignal + u_\wavediff^TFu_\wavediff \right) \\
	&= \frac{\truevar}{\rho^2}\left( \firstelement_\missignal - z_\missignal^T(z_\missignal./d) + z_\wavediff^T(z_\wavediff./d)   \right)\\
	&= \truevar\frac{\firstelement_\missignal - z_\missignal^T(z_\missignal./d) + z_\wavediff^T(z_\wavediff./d) }{\rho^2}.
\end{align}
Furthermore, noting that
\begin{align*}
	\rho = \firstelement_\missignal - z_\missignal^T(z_\missignal./d) -  z_\wavediff^T(z_\wavediff./d)-2z_\missignal^T(z_\wavediff./d) +  \firstelement_\wavediff,
\end{align*}
we may write
\begin{align}
	\frac{(\var)^2}{\rho^2} u^TFu = \var \frac{C + E}{\left( C - E + Z + D \right)^2}
\end{align}
where
\begin{align}
	C &= \firstelement_\missignal - z_\missignal^T(z_\missignal./d) \\
	E &= z_\wavediff^T(z_\wavediff./d)\\
	D &= -2z_\missignal^T(z_\wavediff./d) \\
	Z &= \firstelement_\wavediff.
\end{align}
Assuming that the pseudo-true fundamental frequency is not too close to zero, the correlation between signal components corresponding to different harmonic orders tends to zero as $N\to \infty$. The asymptotic expressions for $C, E, D, $ and $Z$ stated in the proposition follow directly.
\end{proof}
%
\begin{lemma} \label{eq:lemma_arrowhead}
As $N\to \infty$, $\frac{1}{N} \fim(\misparam_0)$ and $\frac{1}{N} A(\misparam_0)$ converge to arrowhead matrices.
\end{lemma}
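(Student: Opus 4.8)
The plan is to exploit two structural facts about the assumed model $\missignal_t(\misparam) = \sum_{k=1}^K \amp_k e^{i(\phase_k + k\pitch t)}$: that a derivative with respect to $\phase_j$ or $\amp_j$ touches only the $j$-th sinusoid, whereas a derivative with respect to $\pitch$ touches every order (weighted by $k$), and that the resulting time sums of products of sinusoids obey a standard orthogonality relation as $N\to\infty$. Concretely, I would first record the first and second partials of $\missignal_t^\rreal$ and $\missignal_t^\iimag$ at $\misparam_0$, noting that $\partial \missignal_t^\rreal/\partial\phase_j = -\amp_j\sin(\phase_j + j\pitch_0 t)$, $\partial \missignal_t^\rreal/\partial\amp_j = \cos(\phase_j + j\pitch_0 t)$, together with their imaginary counterparts, while all mixed second partials in distinct orders $j\neq l$ and all amplitude-amplitude second partials vanish identically because the model is additively separable across orders and linear in each amplitude.

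Next, for the Fisher information $\fim(\misparam_0)$, I would show that every entry indexed by two parameters drawn from the phase/amplitude block (rows and columns $2,\dots,2K+1$) with distinct harmonic orders converges to zero after scaling by $1/N$. The key computation is that, combining the real and imaginary contributions via product-to-sum identities, a cross entry between orders $j$ and $l$ reduces to a geometric sum of $e^{i(j-l)\pitch_0 t}$; provided $\pitch_0$ is bounded away from $0$ (and the orders remain distinct modulo $2\pi$) this is $O(1)$ and hence vanishes once divided by $N$. For two parameters of the same order $j$, the phase-amplitude entry of $\fim$ cancels exactly for every $N$, since its real and imaginary products are $-\amp_j\sin(\cdot)\cos(\cdot)$ and $+\amp_j\cos(\cdot)\sin(\cdot)$. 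Thus the normalized lower-right block of $\frac{1}{N}\fim(\misparam_0)$ tends to a diagonal matrix, while its first row and column, which involve $\pitch$ and therefore retain resonant same-order contributions, supply the arrow; this gives the claimed arrowhead limit for $\frac{1}{N}\fim(\misparam_0)$.

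For $A(\misparam_0) = -\tfrac{\var}{\truevar}\fim(\misparam_0) - \extrahessian(\misparam_0)$, it then suffices to verify that the extra-Hessian term contributes nothing off the arrow of the phase/amplitude block. By separability and amplitude-linearity, the only surviving lower-block second derivatives are the same-order phase-phase and phase-amplitude ones; the former lands on the diagonal, so the only entry that could spoil the arrowhead structure is the same-order phase-amplitude term. Here I would write the relevant combination as $\frac{1}{N}\sum_{t=0}^{N-1} \imag\!\big(\wavediff_t\, e^{-i(\phase_j + j\pitch_0 t)}\big)$ and expand $\wavediff_t = \missignal_t(\misparam_0) - \signal_t$ in complex exponentials; every term oscillates at a nonzero multiple of $\pitch_0$ except the single resonant assumed-signal component of order $j$, whose average tends to $\imag(\amp_j)$, which is zero since $\amp_j$ is real. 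Hence this entry vanishes asymptotically, $\extrahessian(\misparam_0)$ inherits the arrowhead structure in the limit, and $\frac{1}{N}A(\misparam_0)$ does as well.

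The main obstacle I anticipate is precisely this same-order phase-amplitude entry of $\extrahessian$: unlike in the Fisher-information block it does not cancel term-by-term, and its summand mixes the assumed-signal frequencies $j\pitch_0$ with the slightly detuned true frequencies $\truefreq_j = j\pitch_0 - \freqdiff_j$, so one must argue carefully that the only non-oscillatory survivor is the $j$-th assumed-signal component and that its limit is purely real. This is also where the standing hypothesis that $\pitch_0$ is not too close to $0$ is essential, as it is what guarantees that all frequency separations $(j-l)\pitch_0$, as well as the detunings feeding the remaining cross terms, stay bounded away from $0$ modulo $2\pi$, so that the geometric-sum averages genuinely decay.
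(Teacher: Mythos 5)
Your treatment of $\fim(\misparam_0)$ is correct and is essentially the paper's own argument: cross-order entries of the phase/amplitude block are Dirichlet-type sums that stay $O(1)$ when $\pitch_0$ is bounded away from zero, same-order phase--amplitude entries cancel identically, and the $\pitch$-row and $\pitch$-column supply the arrow. The genuine gap is in your treatment of $\extrahessian(\misparam_0)$. You correctly isolate the same-order phase--amplitude entry, proportional to $\sum_{t}\imag\bigl(\wavediff_t e^{-i(\phase_j+j\pitch_0 t)}\bigr)$, as the only entry that could spoil the arrowhead structure, but the oscillation argument you propose for it is both unnecessary and invalid. Unnecessary, because this quantity is exactly $\amp_j^{-1}$ times the $\phase_j$-component of the stationarity condition satisfied by the least-squares minimizer $\misparam_0$,
\begin{align*}
\sum_{t=0}^{N-1}\wavediff_t^\rreal\,\nabla_\misparam\missignal_t^\rreal + \wavediff_t^\iimag\,\nabla_\misparam\missignal_t^\iimag = 0 ,
\end{align*}
since $\partial^2\missignal_t/\partial\phase_j\partial\amp_j=\amp_j^{-1}\,\partial\missignal_t/\partial\phase_j$; hence it vanishes \emph{exactly}, for every finite $N$, with no assumption whatsoever on the detunings. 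This is precisely the paper's route: every second derivative not involving $\pitch$ is a $t$-independent real multiple of a first derivative, so the stationarity condition annihilates the entire phase/amplitude block of $\extrahessian(\misparam_0)$.

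Invalid, because the standing hypothesis your averaging needs --- detunings $\freqdiff_k$ bounded away from zero modulo $2\pi$ --- is incompatible with a nondegenerate pseudo-true parameter, and with it the term-by-term limits come out wrong. Apply your own averaging to the real-part analogue $\frac{1}{N}\sum_t\rreal\bigl(\wavediff_t e^{-i(\phase_j+j\pitch_0 t)}\bigr)$: discarding all oscillatory terms gives the limit $\amp_j\neq 0$, yet the $\amp_j$-component of the stationarity condition forces this sum to be exactly zero for every $N$. The only way to reconcile the two is $\amp_j\to 0$; indeed, if a detuning stays bounded away from zero, Dirichlet-kernel attenuation drives the corresponding pseudo-true amplitude to zero, and the model degenerates. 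In the regime of interest (small inharmonicity), the ``detuned'' true-signal sums $\frac{1}{N}\sum_t e^{i\freqdiff_j t}$ are $O(1)$, not $o(1)$: the pseudo-true parameters are tuned precisely so that these resonant true-signal contributions balance the assumed-signal ones, and they may not be dropped. Note also that the exact (rather than asymptotic) vanishing matters downstream: the proof of the asymptotic MCRLB takes the lower-right block of $A(\misparam_0)$ to be $\text{diag}(d)$ with $d$ built from $\fim$ alone, whereas your averaging would, for instance, assign the same-order phase--phase diagonal entries of $\extrahessian$ the spurious nonzero limit $-2\amp_j^2/\var$ (they too are exactly zero by stationarity). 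You should therefore replace the oscillation argument for $\extrahessian$ by the stationarity argument; the oscillation argument is needed, and valid, only for the off-diagonal entries of $\fim$.
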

%
\begin{proof}
Firstly, it may be noted that as $\misparam_0$ solves the least squares criterion in \eqref{eq:pseudo_true_param}, it directly follows from the optimality criterion that
\begin{align}
	\sum_{t=0}^{N-1} \wavediff_t^\rreal \nabla_\misparam \missignal_t^\rreal + \sum_{t=0}^{N-1} \wavediff_t^\iimag \nabla_\misparam \missignal_t^\iimag = 0.
\end{align}
Then, as any second derivative of $\missignal_t^\rreal$ and $\missignal_t^\iimag$ not involving differentiation with respect to $\pitch$ is equal to a constant real scaling, i.e., not depending on $t$, of a corresponding element of $\nabla_\misparam \missignal_t^\rreal$ and $\nabla_\misparam \missignal_t^\iimag$, respectively, it follows that
\begin{align}
	\sum_{t=0}^{N-1} \wavediff_t^\rreal \nabla_\misscale^2 \missignal_t^\rreal + \sum_{t=0}^{N-1} \wavediff_t^\iimag \nabla_\misscale^2 \missignal_t^\iimag = 0,
\end{align}
when all quantities are evaluated at $\misparam = \misparam_0$. Thus, only elements of $\extrahessian(\misparam_0)$ related to partial derivatives with respect to $\pitch$ are non-zero, and we may conclude that only the first column and first row of $\extrahessian(\misparam_0)$ are non-zero, which holds for any $N \in \RN$. Considering the elements of $\fim(\misparam_0)$, it may be noted that for elements not containing partial derivatives with respect to $\pitch$,
\begin{align*}
	\sum_{t=0}^{N-1} \frac{\partial \missignal_t^\rreal}{\partial \amp_\ell}\frac{\partial \missignal_t^\rreal}{\partial \amp_k} +\frac{\partial \missignal_t^\iimag}{\partial \amp_\ell}\frac{\partial \missignal_t^\iimag}{\partial \amp_k} &= \sum_{t=0}^{N-1} \cos(\phase_\ell-\phase_k + (\ell-k)\pitch t) \\
	\!\sum_{t=0}^{N-1}\!\frac{\partial \missignal_t^\rreal}{\partial \phase_\ell}\frac{\partial \missignal_t^\rreal}{\partial \phase_k}\!+\!\frac{\partial \missignal_t^\iimag}{\partial \phase_\ell}\frac{\partial \missignal_t^\iimag}{\partial \phase_k}\!&=\!\amp_\ell \amp_k\!\sum_{t=0}^{N-1}  \cos(\phase_\ell\!-\!\phase_k\!+\!(\ell\!-\!k)\pitch t) \\
	\!\sum_{t=0}^{N-1}\!\frac{\partial \missignal_t^\rreal}{\partial \phase_\ell}\frac{\partial \missignal_t^\rreal}{\partial \amp_k}\!+\!\frac{\partial \missignal_t^\iimag}{\partial \phase_\ell}\frac{\partial \missignal_t^\iimag}{\partial \amp_k}\!&=\!\sum_{t=0}^{N-1}\!\amp_\ell  \sin(\phase_\ell\!-\!\phase_k\!+\!(\ell\!-\!k)\pitch t).
\end{align*}
From this, one may conclude that any such off-diagonal element converges to zero when normalized by $\frac{1}{N}$, whereas the diagonals are constants identical to either $1$ or $r_k^2$, when scaled in the same way. That is, for large $N$, these off-diagonal elements are negligible compared to the diagonal. Using the same line of reasoning, it can be shown that the off-diagonal elements related to partial derivates of $\pitch$ grow linearly when scaled by $\frac{1}{N}$, whereas the first element on the diagonal grows quadratically. Thus, $\frac{1}{N} \fim(\misparam_0)$, and thereby $\frac{1}{N}A(\misparam_0)$, converges to an arrowhead matrix as $N \to \infty$.
\end{proof}
%
%
\section{References}
\centering{\normalsize{\textbf{REFERENCES}}}
\vspace{1mm}
\bibliographystyle{IEEEbib}
\bibliography{IEEEabrv,mis_specified_pitch_icassp2020_arxiv2.bbl}

\begin{thebibliography}{10}

\bibitem{NorholmJC16_24}
S.~M. N{\o}rholm, J.~R. Jensen, and M.~G. Christensen,
\newblock ``{I}nstantaneous {F}undamental {F}requency {E}stimation {W}ith
  {O}ptimal {S}egmentation for {N}onstationary {V}oiced {S}peech,''
\newblock {\em IEEE Trans. Acoust., Speech, Language Process.}, vol. 24, no.
  12, pp. 2354--2367, Dec 2016.

\bibitem{Randall11}
R.~B. Randall,
\newblock {\em {V}ibration-{B}ased {C}ondition {M}onitoring: {I}ndustrial,
  {A}erospace and {A}utomotive {A}pplications},
\newblock John Wiley \& Sons, Chichester, UK, 2011.

\bibitem{ChristensenJ09}
M.~Christensen and A.~Jakobsson,
\newblock {\em {M}ulti-{P}itch {E}stimation},
\newblock Morgan \& Claypool, San Rafael, Calif., 2009.

\bibitem{LittleMHSR09_56}
M.~A. Little, P.~E. McSharry, E.~J. Hunter, J.~Spielman, and L.~O. Ramig,
\newblock ``{S}uitability of {D}ysphonia {M}easurements for {T}elemonitoring of
  {P}arkinson's disease,''
\newblock {\em IEEE Trans. Biomed. Eng.}, vol. 56, no. 4, pp. 1015--102, April
  2009.

\bibitem{NielsenJJCJ17_135}
J.~K. Nielsen, T.~L. Jensen, J.~R. Jensen, M.~G. Christensen, and S.~H. Jensen,
\newblock ``Fast fundamental frequency estimation: Making a ststatistical
  efficient estimator computationally efficient,''
\newblock {\em Elsevier Signal Processing}, vol. 135, pp. 188--197, Jan 2017.

\bibitem{ElvanderSJ17_25}
F.~Elvander, J.~Sw\"ard, and A.~Jakobsson,
\newblock ``{O}nline {E}stimation of {M}ultiple {H}armonic {S}ignals,''
\newblock {\em IEEE/ACM Transactions on Audio, Speech, and Language
  Processing}, vol. 25, no. 2, pp. 273--284, February 2017.

\bibitem{AdalbjornssonJC15_109}
S.~I. Adalbj\"{o}rnsson, A.~Jakobsson, and M.~G. Christensen,
\newblock ``{M}ulti-{P}itch {E}stimation {E}xploiting {B}lock {S}parsity,''
\newblock {\em Elsevier Signal Processing}, vol. 109, pp. 236--247, April 2015.

\bibitem{Fletcher62_36}
H.~Fletcher,
\newblock ``{N}ormal vibration frequencies of stiff piano string,''
\newblock {\em Journal of the Acoustical Society of America}, vol. 36, no. 1,
  1962.

\bibitem{GeorgeS97_5}
E.~B. George and M.~J.~T. Smith,
\newblock ``{S}peech analysis/synthesis and modification using an
  analysis-by-synthesis/overlap-add sinusoidal model,''
\newblock {\em {IEEE} Trans. Speech Audio Process.}, vol. 5, no. 5, pp.
  389--406, Sep 1997.

\bibitem{ZhangCJM10_18}
J.~X. Zhang, M.~G. Christensen, S.~H. Jensen, and M.~Moonen,
\newblock ``{A} {R}obust and {C}omputationally {E}fficient {S}ubspace-{B}ased
  {F}undamental {F}requency {E}stimator,''
\newblock {\em {IEEE} Trans. Audio, Speech, Lang. Process.}, vol. 18, no. 3,
  pp. 487--497, March 2010.

\bibitem{ButtASJ13_icassp}
N.~R. Butt, S.~I. Adalbj\"{o}rnsson, S.~D. Somasundaram, and A.~Jakobsson,
\newblock ``{R}obust {F}undamental {F}requency {E}stimation in the {P}resence
  of {I}nharmonicities,''
\newblock in {\em 38th IEEE Int. Conf. on Acoustics, Speech, and Signal
  Processing}, Vancouver, May 26--31, 2013.

\bibitem{ElvanderAKJ17_icassp}
F.~Elvander, S.~I. Adalbj\"ornsson, J.~Karlsson, and A.~Jakobsson,
\newblock ``{U}sing {O}ptimal {T}ransport for {E}stimating {I}nharmonic {P}itch
  {S}ignals,''
\newblock in {\em 42nd IEEE Int. Conf. on Acoustics, Speech, and Signal
  Processing}, New Orleans, LA, USA, March 5-9 2017, pp. 331--335.

\bibitem{FortunatiGGR17_34}
S.~{Fortunati}, F.~{Gini}, M.~S. {Greco}, and C.~D. {Richmond},
\newblock ``Performance bounds for parameter estimation under misspecified
  models: Fundamental findings and applications,''
\newblock {\em IEEE Signal Processing Mag.}, vol. 34, no. 6, pp. 142--157, Nov
  2017.

\bibitem{RichmondH15_63}
C.~D. Richmond and L.~L. Horowitz,
\newblock ``Parameter {B}ounds on {E}stimation {A}ccuracy {U}nder {M}odel
  {M}isspecification,''
\newblock {\em IEEE Trans. Signal. Process}, vol. 63, no. 9, pp. 2263--2278,
  2015.

\bibitem{PlanteMA95_eurospeech}
F.~Plante, G.F. Meyer, and W.~A. Ainsworth,
\newblock ``A {P}itch {E}xtraction {R}eference {D}atabase,''
\newblock in {\em EUROSPEECH 1995}, Madrid, Spain, 1995, pp. 837--840.

\bibitem{Marple99_47}
S.~L. Marple,
\newblock ``{C}omputing the discrete-time ``analytic" signal via {FFT},''
\newblock {\em {IEEE} Trans. Signal Process.}, vol. 47, no. 9, pp. 2600--2603,
  September 1999.

\bibitem{ChristensenJJ07_15}
M.~G. Christensen, A.~Jakobsson, and S.~H. Jensen,
\newblock ``{J}oint {H}igh-{R}esolution {F}undamental {F}requency and {O}rder
  {E}stimation,''
\newblock {\em {IEEE} Trans. Acoust., Speech, Signal Process.}, vol. 15, no. 5,
  pp. 1635--1644, July 2007.

\bibitem{GolubV96}
G.~H. Golub and C.~F.~Van Loan,
\newblock {\em {M}atrix {C}omputations},
\newblock The John Hopkins University Press, $3^{rd}$ edition, 1996.

\end{thebibliography}
%
\end{document}